\def\E{\mathbf{E}}
\def\1{\mathbf{1}}
\def\al{\alpha}
\def\be{\beta}
\def\pa{\partial}
\def\I{\mathds{1}}
\def\II{\I_{[\frac{\alpha }{A \beta },+\infty)}(n)}
\def\III{\I_{[\frac{\alpha }{A \beta },+\infty)}(N)}
\def\del{\al/\be}
\def\sig{\frac{\al}{\be s}}
\newtheorem{theorem}{Theorem}[section]
\newtheorem{lemma}{Lemma}[section]
\newtheorem{remark}{Remark}
\newcommand{\si}{\sigma}
\def\t{\thanks{Lomonosov Moscow State University, Higher School of Economics, Professor Emeritus from
the University of Warwick, Email: v.n.kolokoltsov@gmail.com}}
\def\tt{\thanks{Moscow Institute of Physics and Technology, Email: dranov.em@phystech.edu}}
\def\ttt{\thanks{Lomonosov Moscow State University, Email: denis.piskun.01@mail.ru}}
\begin{document}
\title{A new approach to the theory of optimal income tax}
%\thanks{}

\author{
Vassili N. Kolokoltsov\t\,,
Egor M. Dranov\tt\,,
Denis E. Piskun\ttt\\
The project is supported by the Vega Institute Foundation.
}

\maketitle

\begin{abstract}
The Nobel-price winning Mirrlees' theory of optimal taxation inspired a long sequence of research
on its refinement and enhancement. However, an issue of concern has been always the fact that, as was shown in many publications,
the optimal schedule in Mirrlees' paradigm of maximising the total utility (constructed from individually optimised individual ones)
usually did not lead to progressive taxation (contradicting the ethically supported practice in developed economies),
and often even assigned minimal tax rates to the higher paid strata of society.
The first objective of this paper is to support this conclusion by proving a theorem on optimal tax schedule
 in (practically most exploited) piecewise-linear environment under a simplest natural utility function.
 The second objective is to suggest a new paradigm for optimal taxation, where instead of just total average
 utility maximization one introduces a standard deviation of utility as a second parameter
 (in some analogy with Marcowitz portfolio optimization). We show that this approach leads to transparent and easy
  interpreted optimality criteria for income tax.
\end{abstract}

{\bf Key words:} optimal income tax schedule, piecewise linear taxation, two-parameter optimization.

\section{Introduction}

The Nobel-price winning Mirrlees' theory of optimal taxation \cite{Mirr71} inspired a long sequence of research
on its refinement and enhancement. In particular, starting from \cite{Sheshinski}, lots of efforts were devoted
to a simpler model with linear taxation. More advanced and practically possibly most interesting case deals with
piecewise linear tax function, see \cite{Sheshinski89}, \cite{Apps} and \cite{Slem}.
However, an issue of concern (see e.g. \cite{WW}) has been always the fact that, as was shown in many publications,
the optimal schedule in Mirrlees' paradigm of maximising the total utility (constructed from individually optimised individual ones)
usually did not lead to progressive taxation (contradicting the ethically supported practice in developed economies).
Often it even assigned minimal tax rates to the higher paid strata of society, see e.g. \cite{Sadka}, \cite{Seade}.
Several suggestions  were made to circumvent these conclusions, see e.g. \cite{Stiglitz}, \cite{WW}, \cite{Saez}.
A different line of research is connected with optimal taxation in connection with possible ta evasion and corruption,
for which we refer to \cite{KolMal} and \cite{Vasin} and numerous references therein.

The first objective of the present paper is to support the conclusion of not progressive optimal
taxation in the standard model  by proving a simple result on optimal tax schedule
 in piecewise-linear environment under the simplest natural utility function.
 The second and the main objective is to suggest a new paradigm for optimal taxation, where instead of just total average
 utility maximization one introduces the standard deviation of utility as a second parameter
 (in some analogy with Marcowitz portfolio optimization). We show that this approach leads to the transparent and easy
  interpreted optimality criteria for income tax.

The paper is organised as follows. In Section \ref{secstand} we prove the result that, under the simplest
personal utility function $u(c,l)=c-l^2/2$ and linear or piecewise linear (with one kink) tax function, the optimal taxation
in the standard setting of maximal total utility is simply no taxation at all. In Section \ref{newparlin}
we introduce our new paradigm of optimal taxation that takes into account not only the total utility but also
its variance. We perform complete analysis of the problem in the simplest case of linear tax. The result is a very natural
(and easy to interpret) optimal linear taxes depending on the choice of a parameter that quantitatively assesses
the acceptable ratio between the growth of total utility and the growth of its standard deviation, the latter considered
as a measure of social tension in a society. In Section
\ref{newpartwobrac} the analysis is extended to the case of piecewise-linear tax with two brackets. Here analytical
solutions are much more difficult to obtain, but numeric analysis is not too complicated to perform, leading essentially
to analogous results, as in the linear case.  In Section \ref{seclogut} we show how our program of two-criteria
optimal taxation works with another example of personal utility, the logarithmic one.

\section{Piecewise-linear tax in the standard setting}
\label{secstand}

Recall that in the standard approach to optimal income tax (see \cite{Mirr71} and \cite{Sheshinski}),
 it is assumed that individuals are characterised by a parameter $n\ge 0$, yielding their production
 per unit of efforts (e.g. time), with a (exogenously given) distribution $f(n)$. Therefore, if an
 individual chooses a level of efforts $l$, then his income before tax is $y=nl$. Mechanism design
 of a government is specified by a tax function $t(y)$, which is assumed to be continuous and nondecreasing,
so that the income of an individual after tax is $c(nl)=y-t(y)=nl-t(nl)$. Individuals are supposed to choose $l$
that maximises certain (exogenously given) utility function $u(c,l)$ that is assumed to be increasing in $c$ and
 decreasing in $l$. We denote by $l_{max}(n)$ a point of maximum of $u(c(nl),l)$ for an individual parametrized
 by $n$. For a utility discussed below, a finite point $l_{max}(n)$ always exists, though may be not unique for
some discrete set of $n$.

Of course, the final quantitative result depends essentially on $u$, even if one specifies its additional
natural properties, as proposed e.g. in \cite{Mirr71} and \cite{Sheshinski}. The simplest reasonable $u$ can be
chosen as $u(c,l)=c-D(l)$ with an increasing convex function $D(l)$, which we adopt here. Moreover, as can be seen
from our exposition below, the choice of the function $D$ is not very essential, and thus we shall use the utility function
\[
u(c,l)=c-l^2/2.
\]
When $l_{\max}(n)$ are chosen the total expected utility becomes
\begin{equation}
\label{eqtotu}
U=\int_0^{\infty} u(c(nl_{\max}(n)), l_{\max}(n)) f(n) dn
=\int_0^{\infty} [nl_{\max}(n) -t(nl_{\max}(n))- \frac12 l_{\max}^2(n)] f(n) dn.
\end{equation}
Since $l_{\max}$ are not unique only on a discrete set of $n$,
this non-uniqueness does not affect the value of $U$.

 We also assume, as in \cite{Sheshinski}, that the tax is fully distributed, so that
\begin{equation}
\label{eqtaxdist}
\int_0^{\infty} t(y) f(n) dn =\int_0^{\infty} t(nl_{\max}(n)) f(n) dn=0.
\end{equation}

In the standard approach, the objective of the government is to find a function $t(y)$ that maximises \eqref{eqtotu}
under constraint \eqref{eqtaxdist}.

We shall work with the piecewise-linear tax environment suggested in \cite{Sheshinski89}
and further analysed e.g. in \cite{Apps}, \cite{Slem}, \cite{Bogachev}.

In this paper we shall reduce our attention to the case of one bracket (linear case)
and two brackets (one kink). An extension to arbitrary number of brackets will be considered elsewhere.

In case of two brackets the tax on income $y$ is given by the piecewise linear function
\begin{equation}
\label{eqtaxonekink}
t(y)=
\left\{
\begin{aligned}
&-\al +(1-\be_1)y, \quad y<y_1 \\
& -\al +(1-\be_1)y_1+(1-\be_2)(y-y_1), \quad y>y_1,
\end{aligned}
\right.
\end{equation}
with one kink at $y_1$. Here $\be_1, \be_2 \in [0,1]$ specifies the proportion of income
received after tax for incomes $y<y_1$ and $y>y_1$ respectively, and $\al>0$ is a non
income-dependent subsidy. In this setting the government optimization goes not over an
infinite-dimensional set of functions $t(y)$, but over a finite set of $\be_1, \be_2, y_1, \al$.

When $\be_1=\be_2=\be$, the tax function \eqref{eqtaxonekink} becomes linear:
\begin{equation}
\label{eqtaxlin}
t(y)=-\al +(1-\be)y
\end{equation}

\begin{theorem}
\label{thonekink}
The maximum of (\ref{eqtotu}) over the parameter set  $\al\ge 0, \be_1\in [0,1], \be_2\in [0,1], y_1\ge 0$
of under constraint \eqref{eqtaxdist} is realised on the linear case with $\al=0, \be_1=\be_2=1$, so with no tax at all.
 \end{theorem}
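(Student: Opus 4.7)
The plan is to exploit a conservation identity that collapses the government's objective onto an expression independent of the tax parameters. Since
\[
u(c,l)+t(y)=(c+t(y))-l^2/2=y-l^2/2=nl-l^2/2,
\]
integrating against $f$ and invoking the redistribution constraint \eqref{eqtaxdist} yields, on the feasible set,
\begin{equation*}
U=\int_0^{\infty}\left(nl_{\max}(n)-\tfrac{1}{2} l_{\max}^2(n)\right)f(n)\,dn.
\end{equation*}
This identity is the engine of the argument: the tax parameters $(\al,\be_1,\be_2,y_1)$ enter $U$ only through the endogenous efforts $l_{\max}(n)$ that the taxpayers actually choose.

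First I would establish the identity above, using only the form of $u$ and the budget constraint. Then I would observe the pointwise bound: for each fixed $n$, the scalar map $l\mapsto nl-l^2/2$ attains its maximum at $l=n$ with value $n^2/2$. Hence
\[
U\le \tfrac{1}{2}\int_0^{\infty} n^2 f(n)\,dn
\]
for every admissible quadruple $(\al,\be_1,\be_2,y_1)$. Finally I would check that the no-tax case $\al=0$, $\be_1=\be_2=1$ trivially satisfies \eqref{eqtaxdist} (because $t\equiv 0$) and lets each individual freely maximize $nl-l^2/2$, producing $l_{\max}(n)=n$ for all $n$. This saturates the upper bound, so no tax is indeed a maximizer.

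The only step that requires care is verifying that $l_{\max}(n)$ is well-defined in the two-bracket environment. A direct case analysis compares three candidates---the low-bracket interior optimum $l=\be_1 n$, the high-bracket interior optimum $l=\be_2 n$, and the kink $l=y_1/n$---and selects the feasible one yielding the highest individual utility. Depending on the ordering of $\be_1$ and $\be_2$, either a middle range of types is pinned at the kink (progressive case $\be_1>\be_2$) or a discrete switch occurs between the two interior optima (regressive case $\be_1<\be_2$); either way, non-uniqueness of $l_{\max}(n)$ happens at most on a discrete set, which the paper already observes does not affect $U$. Once this is settled, the theorem drops out of the identity plus the pointwise bound, and there is essentially no hard step---the main ``obstacle'' is notational, not conceptual.
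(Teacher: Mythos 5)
Your proof is correct, and it takes a genuinely different and in fact shorter route than the paper's. You exploit the accounting identity $u(c(nl),l)+t(nl)=nl-\tfrac12 l^2$, so that under the balanced-budget constraint \eqref{eqtaxdist} the objective collapses to $U=\int_0^\infty\bigl(nl_{\max}(n)-\tfrac12 l_{\max}^2(n)\bigr)f(n)\,dn$, which is bounded above pointwise by $\tfrac12\int n^2 f(n)\,dn$ and is saturated by the no-tax schedule, where every individual chooses $l_{\max}(n)=n$. Note that with this argument you do not even need the detailed case analysis of $l_{\max}(n)$ that you flag as the delicate step: the upper bound holds for whatever efforts the agents choose (so long as the budget balances), and only the no-tax equality case requires identifying $l_{\max}(n)=n$, which is immediate. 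The paper instead invokes Lemma \ref{lemmaonekink} to write $u_{\max}(n)$ and $t_{\max}(n)$ explicitly in each bracket regime, substitutes the constraint to eliminate $\al$, and shows by direct computation that $\partial U/\partial\be_1$ and $\partial U/\partial\be_2$ are positive, forcing $\be_1=\be_2=1$. Your argument buys brevity and generality --- it works verbatim for an arbitrary (not necessarily piecewise-linear) tax function and for any convex effort cost $D(l)$ in place of $l^2/2$, since total utility equals total surplus whenever taxes are purely redistributive --- while the paper's computation yields the extra monotonicity information that $U$ increases in both marginal retention rates throughout the feasible set, not merely that the corner point is optimal. The only caveats worth stating explicitly are the implicit assumption $\E N^2<\infty$ (otherwise both bound and maximizer are infinite and the statement is vacuous) and the observation, which the paper also makes, that non-uniqueness of $l_{\max}$ on a discrete set of $n$ does not affect the integrals.
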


 \begin{proof}
For $t(y)$ of form \eqref{eqtaxonekink} we have that
\[
u(c(ln),l)=\left\{
\begin{aligned}
& u_1(l,n)=\al +\be_1 nl-\frac12 l^2, \quad l<y_1/n \\
& u_2(l,n)) =\al +(\be_1-\be_2)y_1 +\be_2 nl -\frac12 l^2, \quad l>y_1/n
\end{aligned}
\right.
\]
Hence the points of maximum for $u_1(l,n)$ and $u_2(l,n)$ are
\[
l_{1m}=\min(\be_1n, y_1/n), \quad l_{2m}=\max(\be_2n, y_1/n).
\]

The key points that distinguish different maxima are
 \[
 n_1=\sqrt{y_1/\be_1}, \quad n_2=\sqrt{y_1/\be_2},
 \quad n_3=\sqrt{  \frac{2y_1}{\be_1+\be_2}}.
 \]

The following result specifies the optimal choice of individuals
under $t(y)$ of type \eqref{eqtaxonekink}. We omit a proof,
as it is obtained by a straightforward analysis
of two quadratic function. It can be seen also as a
particular performance of a more general result from \cite{Bogachev}.

\begin{lemma}
\label{lemmaonekink}
(i) Let $\be_1\le \be_2$ (or $n_1 \ge n_2$:  concave tax). Then
the maximum points $l_{\max}(n)$ and the corresponding $u_{\max}(n)$ and $t_{\max}(n)$ are
\[
\left\{
\begin{aligned}
& l_{\max}(n)=\be_1 n \\
& u_{\max}(n)=\al+\frac12 \be_1^2n^2 \\
& t_{\max}(n)=-\al+(1-\be_1)\be_1 n^2
\end{aligned}
\right.
\]
when $n\le n_3$, or
 \[
\left\{
\begin{aligned}
& l_{\max}(n)=\be_2 n \\
& u_{\max}(n)=\al+\frac12 \be_2^2n^2 -y_1(\be_2-\be_1) \\
& t_{\max}(n)=-\al+(1-\be_2)\be_2 n^2+(\be_2-\be_1)y_1
\end{aligned}
\right.
\]
when $n\ge n_3$.

(ii) Let $\be_1 \ge \be_2$ (or $n_1 \le n_2$: convex tax). Then
\[
l_{\max}(n)=
\left\{
\begin{aligned}
& l_{\max}(n)=\be_1 n \\
& u_{\max}(n)=\al+\frac12 \be_1^2n^2 \\
& t_{\max}(n)=-\al+(1-\be_1)\be_1 n^2
\end{aligned}
\right.
\]
when $n<n_1$,

 \[
\left\{
\begin{aligned}
& l_{\max}(n)=\be_2 n \\
& u_{\max}(n)=\al+\frac12 \be_2^2n^2 -y_1(\be_2-\be_1) \\
& t_{\max}(n)=-\al+(1-\be_2)\be_2 n^2+(\be_2-\be_1)y_1
\end{aligned}
\right.
\]
when $n \ge n_2$, and
 \[
\left\{
\begin{aligned}
& l_{\max}(n)=y_1/ n \\
& u_{\max}(n)=\al+\be_1 y_1 -\frac{y_1^2}{2n^2} \\
& t_{\max}(n)=-\al+(1-\be_1)y_1
\end{aligned}
\right.
\]
when $n\in [n_1, n_2]$.
\end{lemma}

Recall that we are analysing the optimization problem of finding $\al\ge 0, \be_1\in [0,1]$,
$\be_2\in [0,2], y_1\ge 0$ that maximise the total expected utility
\[
U=\int_0^{\infty} u_{\max}(n) f(n) dn
\]
under the constraint
\[
\int_0^{\infty} t(nl_{\max}(n)) f(n) dn =0.
\]

If $\be_1\le \be_2$, the constraint takes the form
\[
\al=\int_0^{n_3} (1-\be_1)\be_1 n^2 f(n) dn
\]
\begin{equation}
\label{eqaltwo1}
+\int_{n_3}^{\infty} [(1-\be_2)\be_2 n^2+(\be_2-\be_1)y_1] f(n) dn
\end{equation}
If
 $\be_1\ge \be_2$, the constraint takes the form
\[
\al=\int_0^{n_1} (1-\be_1)\be_1 n^2 f(n) dn
+(1-\be_1)y_1 \int_{\be_1}^{\be_2} f(n) dn
\]
\begin{equation}
\label{eqaltwo2}
+\int_{n_2}^{\infty} [(1-\be_2)\be_2 n^2+(\be_2-\be_1)y_1] f(n) dn.
\end{equation}

Substituting we get, in the first case $\be_1\le \be_2$, that
\[
U=\int u_{\max}(n) f(n) dn =\int_0^{n_3} (\be_1-\frac12 \be_1^2)n^2 f(n) dn
+\int_{n_3}^{\infty}(\be_2-\frac12 \be_2^2) n^2 f(n) dn.
\]

And then
\[
\frac{\pa U}{\pa \be_2}
=\frac{\pa n_3}{\pa \be_2}(\be_1-\be_2)(1-\frac12 (\be_1+\be_2)) n_3^2 f(n_3)
\]
\[
 +\int_{n_3}^{\infty} (1-\be_2) n^2 f(n) dn,
\]
\[
\frac{\pa U}{\pa \be_1}
=\frac{\pa n_3}{\pa \be_1}(\be_1-\be_2)(1-\frac12 (\be_1+\be_2)) n_3^2 f(n_3)
\]
\[
 +\int_0^{n_3} (1-\be_1) n^2 f(n) dn,
\]
which both are positive.

In case $\be_1\ge \be_2$,
\[
U =\int_0^{n_1} (\be_1-\frac12 \be_1^2)n^2 f(n) dn
\]
\[
+\int_{n_2}^{\infty}(\be_2-\frac12 \be_2^2) n^2 f(n) dn
+\int_{n_1}^{n_2} (y_1-\frac{y_1^2}{2n^2}) f(n) dn.
\]

And then
\[
\frac{\pa U}{\pa \be_2}
=\frac{\pa n_2}{\pa \be_2}[(y_1-\frac{y_1^2}{2n_2^2})-(\be_2-\frac12 \be_2^2)n_2^2] f(n_2)
\]
\[
 +\int_{n_2}^{\infty} (1-\be_2) n^2 f(n) dn,
\]
\[
\frac{\pa U}{\pa \be_1}
=-\frac{\pa n_1}{\pa \be_1}[(y_1-\frac{y_1^2}{2n_1^2})-(\be_1-\frac12 \be_1^2)n_1^2] f(n_1)
\]
\[
 +\int_0^{n_1} (1-\be_1) n^2 f(n) dn,
\]
which again both are positive, because the first terms vanish.
Therefore, the maximum is realised on $\be_2=\be_1=1$.

\end{proof}

\section{New paradigm: linear tax}
\label{newparlin}

Taking partial inspiration from the Marcowitz portfolio theory, we suggest here a new paradigm
for the optimal taxation, where instead of the usual total utility optimization, we suggest a
two-criteria optimization problem, choosing as a second criteria the variance or the standard deviation
of the utility function. While in Marcowitz theory the variance is taken as a measure of risk
(with the usual criticism that only deviations in the undesired direction are related to risk),
in our setting the variance expresses the level of inequality in a society that can be also interpreted as
the level of social tension. In the same way as the optimal portfolio theory can be (and was) recast
alternatively using different measures of risk, our theory can be also modified by using other measures
of social inequality, for a modern review of such measures we refer to \cite{Haye}.

In order to see how it works, let us start with the simple case of linear tax \eqref{eqtaxlin}
and the simplest utility function $u=c-l^2/2$, where all calculations can be performed explicitly.

Then $c=\al+\be nl$ and the point, where $u$ achieves maximum, is $l_{\max}(n)=\be n$.
The corresponding values of utility and tax are
\[
u_{\max}(n)=\al +\frac12(\be n)^2, \quad
t_{\max}(n)=-\al+(1-\be)\be n^2.
\]
Equalising total tax to zero yields
\[
\al =(1-\be) \be \int_0^{\infty} n^2 f(n) dn=(1-\be)\be  \E N^2,
\]
where we denoted by $N$ the random variable of skills with the density $f(n)$
and by $\E$ the corresponding expectation. Thus
\begin{equation}
\label{explin}
U=\int_0^{\infty}  u_{\max}(n) f(n) \, dn =\al + \frac12 \be^2 \E N^2
=\be (1-\frac12 \be) \E N^2.
\end{equation}

Confirming again the result of the previous section we see that maximum of $U$
equals $\E N^2/2$ and occurs at $\be=1$.

Next we calculate the variance. Since the variance $\si^2_u$ of the utility
is not changed by an additive constant, we can calculate it with
$\al=0$ yielding directly
\begin{equation}
\label{varlin}
\si_u^2=\frac14 \be^4 [\E N^4-(\E N^2)^2]=\frac14 \be^4 \si^2(N^2),
\end{equation}
where $\si^2(N^2)$ is the variance of the random variable $N^2$. Thus $\si_u$ is also increasing with $\be$
and takes its maximum value $\si(N^2)/2$ at $\be=1$.

As is the standard approach in a two-criteria optimisation, we can state the two dual problems,
one of minimising $\si_u$ under a given level of $U$, and another of maximising $U$
under a given level of $\si_u$. These two problems usually lead to one and the same
equation linking optimal values of criteria, the corresponding set of solutions being
referred to as the efficient frontier.

In the present simple case with only one parameter $\be$, the total utility and its variance
are seen to lie on the curve given in parametric form by
\eqref{varlin} and \eqref{explin}. Excluding $\be$ yields the equation connecting expected utility
$U$ and its standard deviation $\si_u$:
\begin{equation}
\label{expvar}
\left(\frac{2\si_u}{\si(N^2)}\right)^{1/2}-\frac{\si_u}{\si(N^2)}-\frac{U}{\E N^2}=0.
\end{equation}

In terms of normalised variables
\[
\tilde \si_u=\si_u/\si(N^2), \quad \tilde U=U/\E N^2
\]
it rewrites as
\begin{equation}
\label{expvar1}
 2\tilde \si_u=(\tilde \si_u +\tilde U)^2,
 \end{equation}
 which is seen to be a parabola. We are interested in its part with $\tilde \si_u\ge 0$,
 which is a concave curve lying in the first quadrant of the $(\tilde U, \tilde \si_u)$-plane and
 crossing $\{\tilde U=0\}$-axis at the points $\tilde \si_u=0$ and $\tilde \si_u=2$.
 The maximum of $\tilde U$ is $1/2$
 that occurs at $\tilde \si =1/2$. Since we are interested in minimal $\si$,
 the part of the parabola with $\tilde \si>1/2$
 is of no interest. Moreover, this part corresponds to $\be>1$, which are not allowed.
So only the part with $\tilde \si \in [0,1/2]$ represents the real {\it efficient frontier},
 where optimal solutions lie.

In order to choose a point on the efficient frontier of a two-criteria optimization, the usual approach
is to specify a new utility function of the form $V=U-c\si_u$, where $c$ is the key parameter for the decision making.
In the financial context this parameter measures the risk tolerance (or risk aversion). In the present social choice setting,
$1/c$ designates the growth of social inequality, which is acceptable for a unit growth of the total welfare.
This is of course an exogenously given parameter that must be fed to the mathematical model by public authorities.

For an example, assume that our distributions are normalised so that $\tilde U=U$, $\tilde\si_u=\si_u=\si$.
Thus, given $c$, we have to maximise $V=U-c\si$ on the efficient frontier given by \eqref{expvar1}.
Geometrically, this means to find the point on this frontier, which is tangent to a line parallel to $\{U=c\si\}$.
Analytically, this means to find $\be\in [0,1]$ yielding
\[
\max \{U-c\si: 2\si=(\si+U)^2, \,\, \si\in [0,1/2]\}.
\]
It is straightforward to see that maximum in this problem is achieved on
\[
\si=\frac{1}{2(c+1)^2}, \quad  U=\frac{2c+1}{2(c+1)^2},
\]
which, by \eqref{varlin} (recall that we assumed $\si(N^2)=1$),  corresponds to
$\be=\be(c)=(1+c)^{-1}$. Thus we see that the optimal $\be(c)$ can take any value
from $[0,1]$ depending on the choice of $c$.
For $c=0$ we return to the maximization of $U$ yielding $\be(c)=1$;
 and while $c$ increases to infinity, $\be(c)$ tends to zero.

\begin{remark}
Of course the same result can be obtained by searching $\max\{U-c\si\}$ directly over all $\be\in [0,1]$.
\end{remark}

\section{New paradigm: two bracket case}
\label{newpartwobrac}

Let us extend the analysis of the previous section to the piecewise linear tax
with two brackets, that is, with the tax function \eqref{eqtaxonekink}.

Recall that the budget constraint yields the value for $\al$ given by
\eqref{eqaltwo1} and \eqref{eqaltwo2}. To calculate the variance $\si_u^2$ of the utility,
we again take into account that it does not depend on $\al$.

In case $\be_1\le \be_2$ we have:

\[
\si_u^2=\int_0^{n_3} \frac14 \be_1^4 n^4 f(n) dn
+\int_{n_3}^{\infty} [\frac12 \be_2^2 n^2 -y_1(\be_2-\be_1)]^2 f(n) dn -(U-\al)^2,
\]
\[
U=\al +\int_0^{n_3} \frac12 \be_1^2 n^2 f(n) dn
+\int_{n_3}^{\infty} [\frac12 \be_2^2 n^2 -y_1(\be_2-\be_1)] f(n) dn
\]
\[
=\int_0^{n_3} \frac12 (\be_1-\frac12 \be_1^2) n^2 f(n) dn
+\int_{n_3}^{\infty} (\be_2-\frac12 \be_2^2) n^2 f(n) dn,
\]
where $\al$ is given by the formula (\ref{eqaltwo1}).\\
And in case $\be_1\ge \be_2$ we have:

\[
\si_u^2=\int_0^{n_1} \frac14 \be_1^4 n^4 f(n) dn+\int_{n_1}^{n_2} \left[\be_1 y_1-\frac{y_1^2}{2n^2}\right]^2 f(n) dn
\]
\[
+\int_{n_2}^{\infty} [\frac12 \be_2^2 n^2 -y_1(\be_2-\be_1)]^2 f(n) dn -(U-\al)^2,
\]
\[
U=\al+\int_0^{n_1} \frac12 \be_1^2 n^2 f(n) dn+\int_{n_1}^{n_2} \left[\be_1 y_1-\frac{y_1^2}{2n^2}\right] f(n) dn
+\int_{n_2}^{\infty} [\frac12 \be_2^2 n^2 -y_1(\be_2-\be_1)] f(n) dn
\]
\[
=\int_0^{n_1} (\be_1-\frac12 \be_1^2)n^2 f(n) dn
\]
\[
+\int_{n_1}^{n_2} (y_1-\frac{y_1^2}{2n^2}) f(n) dn
+\int_{n_2}^{\infty}(\be_2-\frac12 \be_2^2) n^2 f(n) dn,
\]
where $\al$ is given by the formula (\ref{eqaltwo2}).\\
As in the previous section the standard problems with two criteria is either maximise $U$ given $\si_u$
or minimise $\si_u$ given $U$. Both can be achieved by the analysis of the Lagrange function $V = U - c\sigma_{u}$. 

As we mentioned in introduction, analytic solution is not easy here and  shall perform numeric calculations
assuming that $N$  is uniformly distributed over $[0,10]$, that is, $f(n)=0.1$ for $n\in [0,10]$ and $f(n)=0$ otherwise. It is worth noting that governments rarely use fractional interest rates, so the precision suggested below will satisfy the necessary practical accuracy.

In this analysis, we have pairs of $\beta_{1}$ and $\beta_{2}$ values ranging from 0.01 to 1.00 with a step size of 0.01. In addition to the tax rate parameters, the model incorporates the breakpoint parameter $y_{1}$, which varies from 0.01 to 0.1 with a step size of 0.01. Consequently, the grid size is 10 x 100 x 100.

The issue remains unchanged from the previous paragraph. The objective is to maximise the value of $V = U - c\sigma_{u}$. Rather than finding the tangent to the graph, according to  Remark 1, the optimum will be identified by directly locating the maximum value across the entire grid.

\begin{table}[h!]
\centering
\begin{tabular}{||c c c c c c c ||}
 \hline
 $c$ & $\beta_{1}$ & $\beta_{2}$ & $y_{1}$ & $V$ & $U$ & $\sigma_{u}$ \\ [0.5ex]
 \hline\hline
 0.1 & 0.95 & 0.92 & 0.1 & 15.2982 & 16.5600 & 12.6174\\
 0.2 & 0.91 & 0.85 & 0.1 & 14.1376 & 16.2917 & 10.7705 \\
 0.3 & 0.87 & 0.79 & 0.1 & 13.1407 & 15.9318 & 9.3037 \\
 0.4 & 0.84 & 0.74 & 0.1 & 12.2749 & 15.5403 & 8.1634 \\
 0.5 & 0.81 & 0.69 & 0.1 & 11.5167 & 15.0655 & 7.0976\\ [1ex]
 \hline
\end{tabular}
\caption{Optimal solutions .}
\label{table:1}
\end{table}

Table 1 illustrates the optimal parameter values that yield the maximum value of $V$, with $c$ representing a known constant. As can be observed, the tax rate prior to the specified breakpoint is less than that subsequent to it. This indicates that the novel approach has effectively fulfilled the previously stated objective of fair taxation.

The partial values presented in the Table 1 illustrate that the relationship between $U$ and $\sigma_{u}$ follows a curve that resembles that of the linear case. Following the simulations for a larger number of values of $c$, a graph of $U(\sigma_{u})$ dependence was plotted(Fig. 1). The form of the obtained function closely resembles a portion of the parabola obtained earlier.

 \begin{figure}[htbp]
\centerline{\includegraphics[width=11cm, height=8cm]{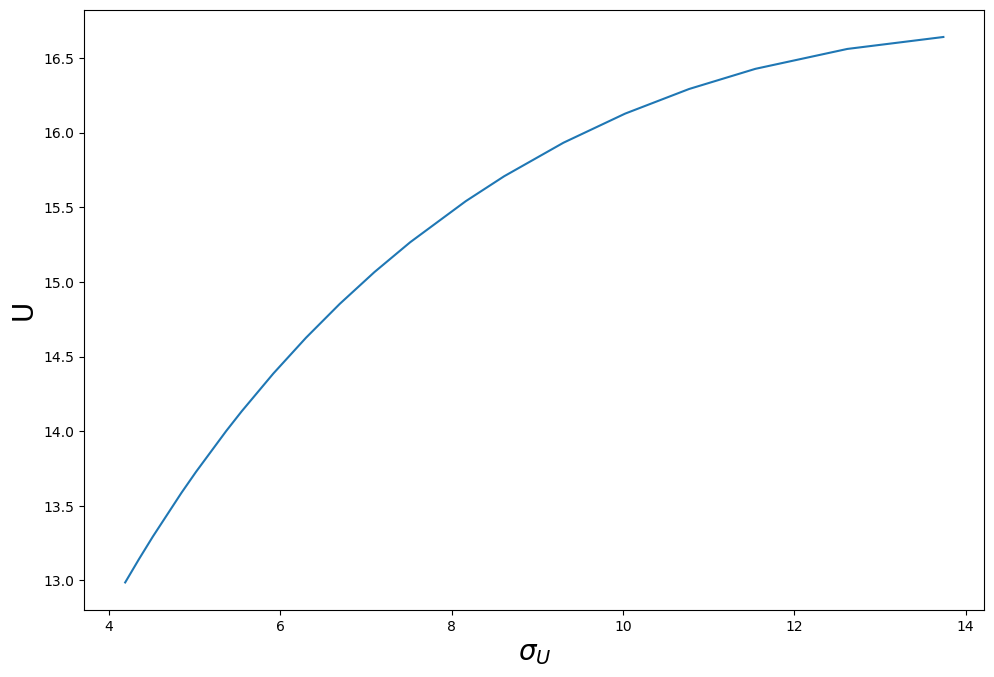}}
\caption{$U(\sigma_{u})$}
\label{fig}
\end{figure}

It is important to note that the solution to the optimisation problem, which minimises the variance $\sigma_{u}$ at a given level of mathematical expectation $U$, also results in a graph of the same form. This demonstrates the equivalence of these approaches.
%-------------------------------------------------------
%-------------------------------------------------------%-------------------------------------------------------
\section{Logarithmic utility function}
\label{seclogut}
 In this section we will consider the model with the linear taxation (\ref{eqtaxlin}) and the following utility function:
 \begin{equation}
     u(c,l)=\ln(c)+A\, \ln(1-l),
 \end{equation}
where $c=c(y)=y-t(y),\, y=nl$.
\\By differentiating, we get:
\begin{equation*}
        l_{\max}(n)=\frac{A \be  n-\al }{(A+1) \be  n} \II ,
 \end{equation*}
where $\I_X$ is the indicator function of X.
\\Then the corresponding value of the utility function is
\begin{equation*}
    u_{\max}(n)=A\, \ln (\al )+
\left(k(A)+A\, \ln \left(1 +\frac{n}{\del }\right)+
\ln \left(\frac{\del +n}{n}\right)\right) \II ,
\end{equation*}
where $k(A)=A\, \ln (A)-(A+1)\, \ln (A+1)$.\\
Hence by definition of mathematical expectation and variation, we can get:
\begin{equation}
\begin{gathered}
U=A\, \ln (\al )+\E\left( \left(k(A)+A\, \ln \left(1 +\frac{N}{\del }\right)+
\ln \left(\frac{\del +N}{N}\right)\right)\III\right),\\
\si_u^2=
k^2(A) \si^2(\III)+
\si^2\left(
\left(A\, \ln \left(1 +\frac{N}{\del }\right)+
\ln \left(\frac{\del +N}{N}\right)\right) \III
\right)+\\
+2k(A)\E(\I_{[0,\frac{\al}{A \be}]}(N))
\E\left(
\left(A\, \ln \left(1 +\frac{N}{\del }\right)+
\ln \left(\frac{\del +N}{N}\right)\right) \III
\right)
\end{gathered}
\end{equation}
Consider the case when N is uniformly distributed over [0, s]. Then the expectation and the variance can be written explicitly. These  explicit forms are not important for further discussion, so we will not write them out. But we mention that they have the form
\begin{equation}
    \begin{gathered}
        U=A\ \,ln(\al)+F_1(A,\sig)\\
        \si_u=F_2(A,\sig),
    \end{gathered}
    \label{common form}
\end{equation}
where $F_1, \, F_2$ are some known functions.\\
 Also we assumed that the tax is fully distributed (\ref{eqtaxdist}). In the case under consideration, this condition takes the form
 \begin{equation}
\sig=\frac{A}{1-\beta}\left( 1+ A\beta  - \sqrt{(A+1) \beta(2+(A-1)\beta)}\right)
\label{sig(be)}
\end{equation}
Substituting equation (\ref{sig(be)}) into equation (\ref{common form}) we get

\begin{equation}
    \begin{gathered}
        U=A\, \ln\left(\al(A,\be,s)\right)+G_1(A,\be)\\
        \si_u=G_2(A,\be),
    \end{gathered}
    \label{common form 2}
\end{equation}
where $G_i=F_i\left(A,\sig(A,\be)\right), \, i=1,2$.
%\\Firstly, we obtained an explicit dependence of $U$ on
%the tax parameter $\be$ for fixed
%parameters $A$ and $s$.
%Secondly, with a fixed parameter $A$, we know the explicit dependence of the standard deviation $\si_u$ on the tax parameter $\be$. This result %is good because the standard deviation does not depend on the performance spread $s$.\\

Let's plot the dependence of $U$ on $\be$ for fixed $A$ and $s$. They are qualitatively the same for different parameters $A$ and $s$. Consider as an example $s = 10^{12}, A = 1$. Then the graph is

{
    \centering
    \includegraphics[width=11cm, height=8cm]{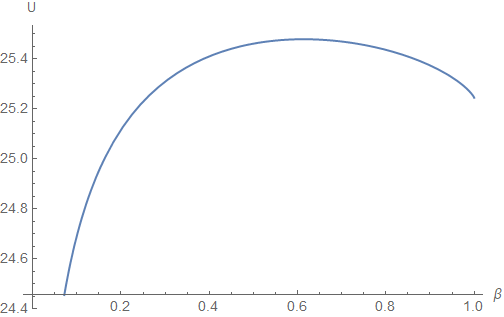}
    \captionof{figure}{$U(\be)$}
    \label{U(be)}
}

The maximum is clearly visible. Its existence and uniqueness can be proved analytically, but we omit this step to shorten the narrative.
The argument and the value of $U$ are numerically found. In the example under consideration
$$(\beta,U)_{max}=(0.6138,25.4788)$$
Let's build graphs of $\si_{u}(\be)$ with a fixed $A$. Qualitatively, the graph is the same for different $A$. For example, if $A=1$ we get

{
    \centering
    \includegraphics[width=11cm, height=8cm]{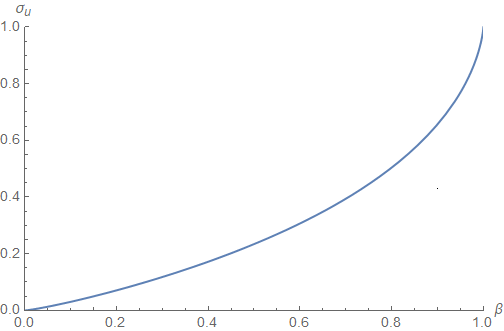}
    \captionof{figure}{$\si_{u}(\be)$}
    \label{si(be)}
}

It is clear that for fixed $A$ and $s$ we can get the parametric plot of $U(\si_u)$. Consider as an example $s = 10^{12}, A = 1$. Then the graph is

{
    \centering
    \includegraphics{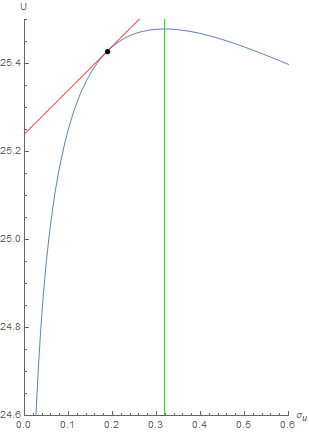}
    \captionof{figure}{$\si_{u}(\be)$}
    \label{U(si)}
}
If $U$ is fixed, it is more logical to choose $\be$, which has a lower standard deviation $\si_u$. That is why we do not consider the part of the graph that is to the right of the green vertical line, which intersects the blue parametric plot at the point with the maximum value of $U$.\\
Note that the problem of minimizing the function $V=U-c\si_u$ can
be implemented both numerically and
graphically. For example, if $c=1$ then the optimal values are
$$(\be,U,\sigma_{u})_{opt}=(0.427,25.428,0.188)$$
%-------------------------------------------------------%-------------------------------------------------------%-------------------------------------------------------
\newpage


\begin{thebibliography}{99}

\bibitem{Apps}
P. Apps, N. Van Long, R. Rees.
Optimal piecewise linear income taxation.
CESifo Working Paper, No. 2565, Center for Economic Studies and ifo Institute (CESifo),
Munich (2009).

\bibitem{Bogachev}
T. V. Bogachev.
Optimal Behavior of Agents in a Piecewise Linear
Taxation Environment.
The Bulletin of Irkutsk State University, Series Mathematics
42 (2022), 17 - 26.

\bibitem{Haye}
R. La Haye and P. Zizler.
Applications of the Gini Index Beyond
Economics and Statistics. Chapter 65 In Bh. Sriraman (Ed.), Handbook of the Mathematics
of the Arts and Sciences. Springer Nature Switzerland 2021, pp. 1661-1690.

\bibitem{KolMal}
V. N. Kolokoltsov, O. A. Malafeyev. {\sl Introduction to the
analysis of many agent systems of competition and cooperation} (in
Russian). St. Petersburg Univ. Press, 2007. Sec Ed. 'Lan' Publisher, St. Petersburg, 2012.

\bibitem{Mirr71}
J.A. Mirrlees.
An Exploration in the Theory of Optimum Income Taxation
 The Review of Economic Studies 38:2 (1971), 175 - 208.

\bibitem{Sadka}
E. Sadka. On Income Distribution, Incentive Effects and Optimal Income Taxation.
Review of Economic Studies 43:2 (1976), 261 - 267.

\bibitem{Saez}
E. Saez. Using Elasticities to Derive Optimal Income Tax Rate.
Review of Economic Studies 68 (2001), 205-229.

\bibitem{Seade}
J. K. Seade.
On the shape of optimal tax schedules.
Journal of Public Economics 7 (1977), 203 - 235.

\bibitem{Sheshinski}
E. Sheshinski.
The Optimal Linear Income-tax.
The Review of Economic Studies 39:3 (1972), 297 - 302.

\bibitem{Sheshinski89}
E. Sheshinski.
Note on the Shape of the Optimum Income Tax Schedule.
Journal of Public Economics 40 (1989), 201 - 215.

\bibitem{Slem}
J. Slemrod, Sh. Yitzhaki, J. Mayshar and M. Lundholm.
The optimal two-bracket linear income tax.
Journal of Public Economics 53 (1994), 269 - 290.

\bibitem{Stiglitz}
J. E. Stiglitz. Pareto efficient and optimal taxation and the new new wekfare economics.
Chapter 15.
In: A. Auerbach and M. Feldstein, eds. (1987), Handbook of public economics, vol. 2 (North-Holland,
Amsterdam), 991-1042.

\bibitem{Vasin}
A.A. Vasin. Non-cooperative games in nature and society (in Russian).
Maks Press Moscow, 2005.

\bibitem{WW}
W. Wane.
The optimal income tax when poverty is a public ‘bad’.
Journal of Public Economics 82 (2001) 271 - 299 .
 \end{thebibliography}
\end{document}